\def\final{1}
\tikzset{
  LabelStyle/.style = { rectangle, rounded corners, draw,
                        minimum width = 2em, fill = yellow!50,
                        text = red, font = \bfseries },
  VertexStyle/.append style = { inner sep=5pt,
                                font = \Large\bfseries},
  EdgeStyle/.append style = {->, bend left} }
\definecolor{darkblue}{rgb}{0,0,0.55}
\definecolor{darkred}{rgb}{0.6,0,0}
\definecolor{darkgreen}{rgb}{0.1,0.35,0}
\newcommand{\Ei}{E}
\newcommand{\pulse}{\text{\sc pulse}}
\newcommand{\exponential}{\text{\sc exponential}}
\newcommand{\damper}{\text{\sc damper}}
\newcommand{\warmupc}{\alpha}
\newcommand{\plen}{\rho}
\newcommand{\dampreq}{Q}
\newcommand{\qon}{\plen}
\newcommand{\qoff}{\qon}
\tikzset{gadget/.style={fill=black!20, draw=black!60,text=black,inner sep=0.8cm,rounded corners, dashed}}
\newcommand{\nnote}[1]{}
\newcommand{\jnote}[1]{}
\newcommand{\tnote}[1]{}
\newcommand{\todo}[1]{}
\newcommand{\nnote}[1]{\begingroup \color{blue!60!black} \em Neil: #1 \endgroup}
\newcommand{\jnote}[1]{\begingroup \color{red} \em Jose: #1 \endgroup}
\newcommand{\todo}[1]{\begingroup \color{red!80!black} \em TODO: #1 \endgroup}
\newtheorem{theorem}{Theorem}
\newtheorem{lemma}{Lemma}
\theoremstyle{definition}
\newtheorem{example}{Example}
\begin{document}

\title{Long term behavior of dynamic equilibria in fluid queuing networks}
\author{Roberto Cominetti}
\affil{Facultad de Ingenier\'ia y Ciencias, Universidad Adolfo Ib\'a\~nez}
\author{Jos\'e Correa}
\affil{Departamento de Ingenier\'ia Industrial, Universidad de Chile}
\author{Neil Olver}
\affil{Department of Mathematics, London School of Economics and Political Science}
\date{}

\maketitle

\begin{abstract}A fluid queuing network constitutes one of the simplest models in which to study flow dynamics over a network. In this model we have a single source-sink pair and each link has a per-time-unit capacity and a transit time. A dynamic equilibrium (or equilibrium flow over time) is a flow pattern over time such that no flow particle has incentives to unilaterally change its path.  Although the model has been around for almost fifty years, only recently results regarding existence and characterization of equilibria have been obtained. In particular the long term behavior remains poorly understood. Our main result in this paper is to show that, under a natural (and obviously necessary) condition on the queuing capacity, a dynamic equilibrium reaches a steady state (after which queue lengths remain constant) in finite time. Previously, it was not even known that queue lengths would remain bounded. The proof is based on the analysis of a rather non-obvious potential function that turns out to be monotone along the evolution of the equilibrium. Furthermore, we show that the steady state is characterized as an optimal solution of a certain linear program. When this program has a unique solution, which occurs generically, the long term behavior is completely predictable. On the contrary, if the linear program has multiple solutions the steady state is more difficult to identify as it depends on the whole temporal evolution of the equilibrium.
\end{abstract}

\section{Introduction}

The theory of flows over time provide a natural and convenient model to describe the dynamics of a continuous stream of particles traveling from a source to a sink in a network, such as urban or Internet traffic. 
Probably the most basic model for the propagation of flow is the so-called \emph{fluid-queue model} in which each arc in the network consists of a fluid queue with an arc-specific capacity followed by a link with constant delay. Thus, the time to traverse an edge is composed of a flow-dependent waiting time in the queue plus a constant travel time after leaving the queue. This model was initially studied in the framework of optimization. \cite{FyF:2,FyF:1} considered a fluid queue model in a discrete time setting and designed an algorithm to compute a flow over time carrying the maximum possible flow from the source $s$ to the sink $t$ in a given timespan. \cite{Gale:1} then showed the existence of a flow pattern that achieves this optimum simultaneously for all time horizons. These results were extended to continuous time by  \cite{Anderson:1} and \cite{Tardos:1}. We refer to \cite{Skutella:1} for an excellent survey. However, when network flows suffer from a lack of coordination among the participating agents, it is natural to take a game theoretic approach. As first described by \cite{Vickrey} for a simple \emph{bottleneck model}, in a dynamic network routing game each infinitesimal particle is interpreted as a player that seeks to complete its journey in the least possible time. Players are forward-looking and anticipate the congestion and queuing delays induced by others upon arrival to any edge in the network. Equilibrium occurs when each particle travels along a shortest path. 

More formally, a \emph{fluid queuing network} is a directed graph $G=(V,\Ei)$ where each arc $e\in\Ei$ consists of a fluid queue 
with capacity $\nu_e>0$  followed by a link with constant delay $\tau_e\geq 0$ (see Figure \ref{fig:arc1}).
A constant inflow rate $u_0>0$ enters the network at a fixed source $s\in V$ and travels towards a terminal node $t\in V$. 
A \emph{dynamic equilibrium} models the temporal evolution of the flows in the network.
Loosely speaking, it consists of a flow pattern in which every particle travels along a shortest path,
accounting for the fact that travel times depend on the instant at which a particle enters the network as well as 
the state of the queues that will be encountered along its path by the time at which they are reached.
Intuitively, if the queues are initially empty, the equilibrium should start by sending all the flow along 
shortest paths considering only the free-flow delays $\tau_e$. 
These paths are likely to become overloaded so that queues will grow on some of its edges and at some point in time 
new paths will become competitive and will be incorporated into the equilibrium. 
These new paths may in turn build queues so that even longer paths may
come into play. Hence one might expect that the equilibrium proceeds 
in phases in which  the paths used by the equilibrium remain stable. 
However, it is unclear if  the number of such phases is finite and whether the
equilibrium will eventually reach a steady state in which the queues and travel times
stabilize.
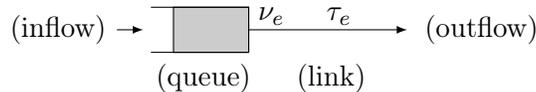
\begin{figure}[t]
        \centering
        \begin{tikzpicture}[>=latex]
        \filldraw[fill=black!20] (1.3,-0.55) rectangle (0.3,0.05);
        \draw (0.0,0.05) -- ++(1.3cm,0) -- ++(0,-0.6cm) -- ++(-1.3cm,0);
        \draw[<-] (-0.1,-0.25) -- +(-10pt,0) node[left] {(inflow)};
        \node at (4.4,-0.25cm) {(outflow)};

         \draw[->] (1.3,-0.25) -- +(60pt,0);
        \node at (1.6,-0.07cm) {$\nu_e$};
        \node at (2.5,-0.07cm) {$\tau_e$};
        \node at (0.7cm,-0.9cm) {(queue)};
        \node at (2.4cm,-0.9cm) {(link)};
        \end{tikzpicture}
        \caption{An arc in the fluid queuing network.}
        \label{fig:arc1}
\end{figure}

Although dynamic equilibria have been around for almost fifty years (see, e.g., \citep{FyF:2,fri,Gale:1,mer,mer1,pep,RanBoyce,Vickrey,XuF}), their existence has only been proved recently by \cite{Zhu} though in a somewhat different setting, and by \cite{meu} who gave the first existence result for a model that covers the case of fluid queuing networks. These proofs, however, rely heavily on functional analysis techniques and provide little intuition on the combinatorial structure of dynamic equilibria, their characterization, or feasible approaches to compute them. Substantial progress was recently achieved by \cite{KochSk:2} by introducing the concept of \emph{thin flows with resetting} that characterize the time derivatives of a dynamic equilibrium, and which provide in turn a method to compute an equilibrium by integration. A slightly refined notion of \emph{normalized} thin flows with resetting was considered by \cite{ccl}, who proved existence and uniqueness, and provided a constructive proof for the existence of a dynamic equilibrium. 

In recent work, further extensions and variants of the model have been studied. In particular, \cite{SS18} extend some of the known results about dynamic equilibria to the case in which there are multiple sources and terminal nodes. However, the multi-commodity case is largely open. Furthermore, \cite{SV19} consider spill back effects to model the fact that in practice the queues cannot grow arbitrarily large and that their effect propagates back in the network. \cite{GH19} consider a related model in which particles behave myopically and make routing decisions based on the current state of the network, without anticipating its evolution. Finally, we mention the work of \cite{C17} who considered an atomic model and established that in series-parallel networks queues remain bounded in a dynamic equilibrium.

In this paper we focus on the long term behavior of dynamic equilibria in fluid queuing networks.  
Clearly if the inflow $u_0$ is very large compared to the queuing capacities, the queues will 
grow without bound, and no steady state can be expected. 
More precisely, let  $\delta^+(S)$ be an $st$-cut
with minimum queuing capacity $\bar \nu=\sum_{e\in \delta^+(S)}\nu_e$; 
if there are multiple options, choose $S$ (containing $s$) to be setwise minimal.
If $u_0>\bar\nu$ 
all the arcs in $\delta^+(S)$ will grow unbounded queues, whereas for $u_0\leq\bar\nu$, it is natural to expect that 
the equilibrium should eventually reach a steady state, where queue lengths remain constant.
This was not known---in fact, it was not even known that queue lengths remain bounded!

Our main goal in this paper is to show that both these properties do indeed hold: more precisely, when $u_0\leq\bar\nu$, the dynamic equilibrium reaches a steady state in \emph{finite} time. At first glance, these convergence properties might seem ``obvious'', and it might seem surprising that they are at all difficult to prove. We will present some examples that illustrate why this is not the case. For instance, it may occur that the flow across the cut $\delta^+(S)$ may temporarily exceed its
capacity $\bar\nu$ by an arbitrarily large factor, forcing the queues to grow very large. This phenomenon may
occur since the inflow $u_0$ entering the network at different points in time may experience 
different delays and eventually superpose at $\delta^+(S)$ which gets an inflow larger than $u_0$. 
In other cases some queues may grow during a period of time after which 
they reduce to zero and then grow again later on. In fact, we give a construction that shows that this can happen an exponential (in the input size) number of times during the evolution! Along the way to our main result, we provide a characterization of the steady state as an optimal solution of a certain
linear programming problem and we discuss when this problem has a unique solution. 
Despite the fact  that convergence to a steady state occurs in finite time, it remains as an open 
question whether this state is attained after finitely many phases. 

The paper is structured as follows. Section~\ref{review} reviews the model of 
fluid queuing networks, including the precise definition of dynamic equilibrium and the main 
results known so far. Then, in Section~\ref{steady} we discuss the notion of steady state and provide a characterization in terms
of a linear program. Inspired by the objective function of this linear program, in Section~\ref{potential} we 
introduce a potential function and we prove that it is a Lyapunov function for the dynamics. This 
potential turns out to be piecewise linear in time with finitely many possible slopes. 
We then prove that the potential remains bounded so that there is a finite time at which
its slope is zero, and we show that in that case the system has reached a steady state.
Further, we provide an explicit pseudopolynomial bound on the convergence time.
Finally, in Section~\ref{examples} we discuss the interesting (and perhaps surprising) examples alluded to earlier,
as well as some remaining open questions.

\section{Dynamic equilibria in fluid queuing networks}\label{review}
In this section we recall the definition of dynamic equilibria in fluid queuing networks, 
and  we briefly review the known results on their existence, characterization, and computation.
The results are stated without proofs for  which we refer to \cite{KochSk:2} and \cite{ccl}.

\subsection{The model}
Consider a fluid queuing network $G=(V,E)$ with arc capacities $\nu_e$ and delays $\tau_e$.
The network dynamics are described in terms of the inflow rates $f_e^+(\theta)$ that enter each 
arc $e\in \Ei$ at time $\theta$, where $f_e^+:[0,\infty)\to[0,\infty)$ is measurable.

\medskip

\noindent{\bf Arc dynamics.} If the inflow $f_e^+(\theta)$ exceeds $\nu_e$ a queue $z_e(\theta)$
will grow at the entrance of the arc. 
\begin{figure}[h]
        \centering
        \begin{tikzpicture}[>=latex]
        \filldraw[fill=black!20] (1.3,-0.55) rectangle (0.3,0.05);
        \draw (0.0,0.05) -- ++(1.3cm,0) -- ++(0,-0.6cm) -- ++(-1.3cm,0);

         \draw[->] (1.3,-0.25) -- +(60pt,0);
        \draw[<-] (-0.1,-0.25) -- +(-10pt,0) node[left] {$f_e^+(\theta)$};
        \node at (4.2,-0.25cm) {$f_e^-(\theta\!+\!\tau_e)$};
        \node at (0.81,-0.25cm) {$z_e(\theta)$};
        \node at (1.55,-0.1cm) {$\nu_e$};
        \node at (2.5,-0.1cm) {$\tau_e$};
        \node at (-1.1cm,-0.8cm) {(inflow)};
        \node at (0.7cm,-0.8cm) {(queue)};
        \node at (2.4cm,-0.8cm) {(link)};
        \node at (4.2cm,-0.8cm) {(outflow)};
        \end{tikzpicture}
        \caption{Dynamics of an arc in the queuing network.}
        \label{fig:arc}
\end{figure}
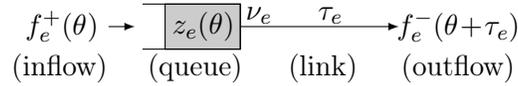
The queues are assumed to operate at capacity, that is to say,  when $z_e(\theta)>0$ the flow is released at rate $\nu_e$, whereas
when the queue is empty the outflow is the minimum between $f_e^+(\theta)$ and the capacity $\nu_e$. 
Hence the queue evolves from its initial state $z_e(0)\!=\!0$ according to
\begin{equation}\label{zprime}
\dot z_e(\theta)=\left\{\begin{array}{cl}
f_e^+(\theta)-\nu_e\hspace{1.2ex}&\mbox{if }z_e(\theta)>0\\
\left[f_e^+(\theta)-\nu_e\right]_+&\mbox{if }z_e(\theta)=0.
\end{array}\right.
\end{equation}
These dynamics uniquely determine the queue lengths $z_e(\theta)$ as well as the arc
outflows
\begin{equation}\label{outflow}
f^-_e(\theta\!+\!\tau_e)=\left\{
\begin{array}{cl}
\nu_e&\mbox{if }z_e(\theta)>0\\
\min\{f_e^+(\theta),\nu_e\}&\mbox{if }z_e(\theta)=0.
\end{array}
\right.
\end{equation}

\medskip
\noindent{\bf Flow conservation.} 
A \emph{flow over time} is a family $(f_e^+)_{e\in\Ei}$ of arc inflows 
such that flow is conserved  at every node $v\in V\setminus\{t\}$, namely for a.e. $\theta\geq 0$
\begin{equation}\label{flowcons}
\sum_{e\in\delta^+(v)}\!\!\!f_e^+(\theta)\;-\!\!\!\sum_{e\in\delta^-(v)}\!\!\!f_e^-(\theta)=\left\{\begin{array}{cl}
u_0&\mbox{if }v=s\\
0&\mbox{if }v\neq s,t.
\end{array}\right.
\end{equation}

\medskip
\noindent{\bf Dynamic shortest paths.} 
A particle entering an arc $e$ at time $\theta$ experiences a queuing delay $z_e(\theta)/\nu_e$ plus
a free-flow delay $\tau_e$ to traverse the arc after leaving the queue, so that it will exit the arc at time
\begin{equation}\label{T}
T_e(\theta)=\theta+\frac{z_e(\theta)}{\nu_e}+\tau_e.
\end{equation}
Consider a particle entering the source node $s$ at time $\theta$. If this particle follows a path 
$p=e_1e_2\cdots e_k$, it will reach the end of the path at time
\begin{equation}\label{Tpath}
T_p(\theta)=T_{e_k}\circ\cdots\circ T_{e_{2}}\circ T_{e_1}(\theta).
\end{equation}
Denoting $\mathcal P_v$ the set of all $sv$-paths, the minimal time at which node $v$ can be reached is
\begin{equation}\label{labels}
\ell_v(\theta)=\min_{p\in\mathcal P_v}T_p(\theta).
\end{equation}
The paths attaining these minima are called  {\em dynamic shortest paths}. The
arcs in these paths are said to be {\em active} at time $\theta$ and
we denote them by $E'_\theta$. Observe that $\ell_v(\theta)$ can also be 
defined through the dynamic Bellman's equations
\begin{equation}\label{bellman}
\left\{\begin{array}{l}
\ell_s(\theta)=\theta\\[1ex]
\ell_w(\theta)={\displaystyle \min_{e=vw\in E}}T_e(\ell_v(\theta))
\end{array}\right.
\end{equation}
so that $e=vw$ is active precisely if $\ell_w(\theta)=T_e(\ell_v(\theta))$.

\medskip
\noindent{\bf Dynamic equilibrium.}  
A dynamic equilibrium is a flow pattern that uses only dynamic shortest paths. More precisely,
let $\Theta_e=\{\theta:e\in E_\theta'\}$ be the set of \emph{entrance times} $\theta$ at which 
the arc $e$ is active, and $\Xi_e=\ell_v(\Theta_e)$ the set 
of \emph{local times} $\xi=\ell_v(\theta)$ at which $e$ will be active. 
A flow over time $(f_e^+)_{e\in\Ei}$ is called a \emph{dynamic equilibrium} iff
for almost every $\xi\geq 0$ we have $f_e^+(\xi)>0\Rightarrow \xi\in\Xi_e$.

\subsection{Characterization of dynamic equilibria}
Since the inflows $f_e^+(\cdot)$ are measurable the same holds for $f_e^-(\cdot)$ 
and we may define the \emph{cumulative inflows} and \emph{cumulative outflows} as
\begin{eqnarray*}
F_e^+(\theta)&=&\mbox{$\int_0^\theta f_e^+(z)\,dz$}\\
F_e^-(\theta)&=&\mbox{$\int_0^\theta f_e^-(z)\,dz$}.
\end{eqnarray*}
These cumulative flows allow to express the queues as $z_e(\theta)=F_e^+(\theta)-F_e^-(\theta+\tau_e)$.
It turns out that a dynamic equilibrium can be equivalently characterized by the fact that for 
each arc $e=vw\in E$ we have 
\begin{equation}\label{dyneq}
F_e^+(\ell_v(\theta))=F_e^-(\ell_w(\theta))\quad\forall\;\theta\geq 0.
\end{equation}
In this case, the functions $x_e(\theta)\triangleq F_e^+(\ell_v(\theta))$ are
static flows with
\begin{equation}\label{staticflow}
\sum_{e\in\delta^+(v)}x_e(\theta)-\sum_{e\in\delta^-(v)}x_e(\theta)=\left\{\begin{array}{rl}
u_0\theta&\mbox{ if }v=s\\
-u_0\theta&\mbox{ if }v=t\\
0&\mbox{ if }v\neq s,t.
\end{array}\right.
\end{equation}

\subsection{Derivatives of a dynamic equilibrium}

The labels $\ell_v(\theta)$ and the static flows $x_e(\theta)$ are nondecreasing functions 
which are also
absolutely continuous so that they can be reconstructed from their derivatives by integration.\footnote{These 
derivatives exist almost everywhere and are locally integrable.} Moreover, from these functions 
one can recover the equilibrium inflows $f_e^+(\cdot)$ using the relation $x_e'(\theta)=f_e^+(\ell_v(\theta))\ell_v'(\theta)$.
Hence, finding a dynamic equilibrium reduces essentially to computing the derivatives $\ell'_v(\theta), x'_e(\theta)$.

Let $\theta$ be a point of differentiability and set $\ell_v'=\ell'_v(\theta)\geq 0$ and $x_e'=x'_e(\theta)\geq 0$.
From \eqref{staticflow} we see that $x'$ is a static $st$-flow of size $u_0$, namely, 
\begin{equation}\label{staticflowder}
\sum_{e\in\delta^+(v)}x_e'-\sum_{e\in\delta^-(v)}x_e'=\left\{\begin{array}{rl}
u_0&\mbox{ if }v=s\\
-u_0&\mbox{ if }v=s\\
0&\mbox{ if }v\neq s,t
\end{array}\right.
\end{equation}
while using \eqref{bellman}, \eqref{T}, \eqref{zprime} and the differentiation rule for a minimum 
we get
\begin{equation}\label{lprime}
\left\{\begin{array}{l}
\ell'_s=1\\[1ex]
\ell_w'={\displaystyle \min_{e=vw\in E_\theta'}}\rho_e(\ell'_v,x_e')
\end{array}\right.
\end{equation}
where
\begin{equation}\label{rhoe}
\rho_e(\ell'_v,x_e')=\left\{\begin{array}{cl}
x_e'/\nu_e&\mbox{if }e\in E_\theta^*\\
\max\{\ell'_v,x_e'/\nu_e\}&\mbox{if }e\not\in E_\theta^*
\end{array}\right.
\end{equation}
with $E_\theta^*$ the set of arcs $e=vw$ with positive queue $z_e(\ell_v(\theta))>0$. 
In addition to this, the conditions for dynamic equilibria imply $E_\theta^*\subset E_\theta'$
as well as 
\begin{equation}\label{ntfr}
\begin{array}{ll}
(\forall\,e\in E_\theta') \qquad~ &x_e'>0\;\Rightarrow\; \ell_w'=\rho_e(\ell_v',x_e')\\
(\forall\,e\not\in E_\theta') \qquad~ &x_e'=0.
\end{array}
\end{equation}

These equations fully characterize the derivatives of a dynamic equilibrium. 
In fact, for all subsets $E^*\subseteq E'\subseteq E$ the system 
\eqref{staticflowder}-\eqref{ntfr} admits at least one solution $(\ell',x')$ and moreover the $\ell'$ 
component is unique. These solutions are called {\em normalized thin flows with 
resetting} {\sc (ntfr)} and can be used to reconstruct a dynamic equilibrium by integration,
proving the existence of equilibria.
We refer to \cite{ccl} for the existence and uniqueness of {\sc ntfr}'s and to \cite{KochSk:2} 
for a description of the integration algorithm and how to find the equilibrium inflows 
$f_e^+(\cdot)$.

Observe that there are only finitely many options for $E^*$ and $E'$. Since the 
corresponding $\ell'$ is unique, it follows that the functions $\ell_v(\theta)$
will be uniquely defined and piecewise linear with finitely many options for
the derivatives. Although the static flows 
$x_e(\theta)$ are not unique in general, one can still find an equilibrium
in which these functions are also piecewise linear by fixing
a specific $x'$ in the {\sc ntfr} for each pair $E^*,E'$.

\subsection{A detailed example}

We now work out the details of a small example that already provides some intuition on how the dynamic equilibria behaves. In particular this example exhibits an unexpected property, namely that the flow coming into the sink can be larger than the network inflow. 

\begin{example}\label{eg:1}
Consider the network consisting of the vertices $\{s,v,t\}$ with edges $e=(s,t),f=(s,v),g=(v,t),h=(v,t)$ and inflow $u_0=u$. Capacities are $\nu_e=u/3$, $\nu_f=3u/4$, $\nu_g=u/3$, and $\nu_h=u$, and delays are $\tau_e=\tau_h=\tau$, and $\tau_f=\tau_g=0$. 
\begin{figure}[h]
    \centering
\begin{tikzpicture}[->,>=stealth',shorten >=1pt,auto,node distance=2.8cm, thick, scale=0.6, every node/.style={scale=0.6}]
\tikzstyle{every state}=[fill=black!20,draw=none,text=black]

  \node[state] (s)   at (0,0)                 {\large $s$};
  \node[state]         (t) at (7,0) {\large $t$};
  \node[state]         (v) at (3,-3) {\large $v$};

  \path (s) edge   [bend left=50]   node [pos=0.12, above, sloped]  {\large $\nu_e=u/3$} 
   						  node [pos=0.5, above, sloped]  {\large $\tau_e=\tau$} (t)
                  edge   [bend right] 	  node  [pos=0.2, below, sloped]  {\large $\nu_f=3u/4$} 
            					  node  [pos=0.7, below, sloped]  {\large $\tau_f=0$}  (v)
            (v) edge   [bend left] 	  node  [pos=0.18, above, sloped]  {\large $\nu_g=u/3$}
        						  node  [pos=0.6, above, sloped]  {\large $\tau_g=0$}  (t)
                 edge   [bend right]        node  [pos=0.13, below, sloped]  {\large $\nu_h=u$}
            					  node  [pos=0.6, below, sloped]  {\large $\tau_h=\tau$} (t);
\end{tikzpicture}
\end{figure}

In a dynamic equilibrium for this instance, flow is initially routed through the shortest path $fg$. 
Then queues grow in both edges until at time $\tau/2$ the path consisting of $e$ enters the shortest path network. 
From that point in time the flow splits in equal proportions between paths $e$ and $fg$ implying that a queue starts growing on edge $e$, 
the queue of $f$ starts decreasing, while the queue on $g$ continues to increase. 
Then, at time $0.7\tau$, the queue on $g$ has grown enough to make $h$ enter the shortest path network. 
At this point all edges except $h$ have a queue and the flow starts splitting as follows: 
$4u/13$ take path $e$, $4u/13$ take path $fg$, and $5u/13$ take path $fh$. 
Therefore all queues start decreasing until at time $2\tau$ the queues on edges $e$ and $f$ deplete simultaneously. 
When this happens the shortest path network stays the same but the edges with queue change since only $g$ still has a queue. 
The new thin flow is thus computed, and the flow starts splitting evenly between the paths $e$, $fg$, and $fh$ (each gets flow $u/3$). 
This last phase constitute a steady state so it lasts forever. 
More precisely in this instance one can compute the derivative of the distance labels at node $t$ as
\begin{equation}\nonumber
\ell'_t(\theta)=\left\{\begin{array}{cl}
3&\mbox{ for }\theta\in [0,\tau/2)\\
3/2&\mbox{ for }\theta\in [\tau/2,\tau/2+\tau/5)\\
12/13&\mbox{ for }\theta\in [\tau/2+\tau/5,2\tau)\\
1&\mbox{ for }\theta\in [2\tau,\infty)\\
\end{array}\right. .
\end{equation}

Interestingly, the amount of flow arriving at $t$ at time $\ell_t(\theta)$ can readily be computed as $u/\ell'_t(\theta)$. So that if we consider the local time at node $t$ this flow is then 
\begin{equation}\nonumber
f^-_e(\theta)+f^-_g(\theta)+f^-_h(\theta)=\left\{\begin{array}{cl}
u/3&\mbox{ for }\theta\in [0,3\tau/2)\\
2u/3&\mbox{ for }\theta\in [3\tau/2,9\tau/5)\\
13u/12&\mbox{ for }\theta\in [9\tau/5,3\tau)\\
u&\mbox{ for }\theta\in [3\tau,\infty).\\
\end{array}\right.
\end{equation}
This brings us to the surprising fact that for some time interval, the flow arriving at the sink is larger than the inflow.
\end{example}

\section{Steady states}\label{steady}
We say that a dynamic equilibrium attains a \emph{steady state}  if for sufficiently large times all the queues are frozen to a constant 
$z_e(\theta)\equiv z^*_e$. This is clearly equivalent to the fact that  the arc travel 
times become constant equal to $\tau_e^*=\tau_e+q_e^*$ with $q_e^*=z_e^*/\nu_e$ the corresponding queuing times. 

\begin{lemma}\label{L1}
A dynamic equilibrium attains a steady state iff  there exists some $\theta^*\geq 0$ such that 
$\ell_v'(\theta)=1$ for every node $v\in V$ and all $\theta\geq\theta^*$. 
\end{lemma}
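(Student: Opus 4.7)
For the forward direction, assume the dynamic equilibrium reaches a steady state from time $\theta^*$ on, so that $z_e(\theta)\equiv z_e^*$ on $[\theta^*,\infty)$. Any particle entering $s$ at time $\theta\ge\theta^*$ then encounters only frozen queues, so by \eqref{T} the map $T_e$ restricted to $[\theta^*,\infty)$ is the shift $\sigma\mapsto\sigma+\tau_e+z_e^*/\nu_e$. Composing such shifts along paths as in \eqref{Tpath} and minimising as in \eqref{labels} gives $\ell_v(\theta)=\theta+c_v$, where $c_v$ is the shortest $s$-$v$ distance in the graph with edge lengths $\tau_e+z_e^*/\nu_e$. Differentiating gives $\ell_v'(\theta)=1$ on $[\theta^*,\infty)$.

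For the converse, suppose $\ell_v'\equiv 1$ on $[\theta^*,\infty)$ for every $v\in V$, so that $\ell_v(\theta)=\theta+c_v$ with $c_v:=\ell_v(\theta^*)-\theta^*$. Substituting into the Bellman equation \eqref{bellman} and using \eqref{T}, for every active arc $e=vw\in E_\theta'$ at any entrance time $\theta\ge\theta^*$ we obtain
\[
z_e(\ell_v(\theta))\;=\;\nu_e\bigl(c_w-c_v-\tau_e\bigr)\;=:\;q_e^*\;\ge\;0,
\]
so the queue necessarily takes the constant value $q_e^*$ at local time $\ell_v(\theta)$ whenever $e$ is active. Meanwhile, inactive arcs ($e\notin E_\theta'$) have $x_e'(\theta)=0$ by \eqref{ntfr}, and a case analysis of \eqref{rhoe} under $\ell_v'=\ell_w'=1$ shows that $x_e'(\theta)\le\nu_e$ for every arc (either $e\in E_\theta^*$ forces $x_e'\in\{0,\nu_e\}$, or $e\in E_\theta'\setminus E_\theta^*$ forces $x_e'\le\nu_e$). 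Since $f_e^+(\ell_v(\theta))=x_e'(\theta)/\ell_v'(\theta)=x_e'(\theta)\le\nu_e$, equation \eqref{zprime} implies each $z_e$ is non-increasing on $[\ell_v(\theta^*),\infty)$.

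The conclusion then follows from this monotonicity. Whenever $z_e(\sigma)$ is positive and different from $q_e^*$, the arc $e$ cannot be active at the corresponding entrance time $\sigma-c_v$, so the inflow vanishes and $\dot z_e(\sigma)=-\nu_e$. Combined with the non-increasing property, this forces $z_e$ into at most two linear phases: any initial excess $z_e(\ell_v(\theta^*))>q_e^*$ is shed at rate $\nu_e$ until the queue reaches $q_e^*$ (possibly stabilising there), and any subsequent dip below $q_e^*$ is irreversible and drives the queue linearly down to $0$ within time at most $q_e^*/\nu_e$. In every case $z_e$ becomes constant after a finite time bounded by $z_e(\ell_v(\theta^*))/\nu_e$; taking the maximum over the finitely many arcs yields a single time beyond which all queues are frozen, that is, a steady state. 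The central subtlety, resolved by monotonicity, is ruling out a recurring build-up and depletion of $z_e$ around the target value $q_e^*$, which is what would prevent convergence to a constant in finite time.
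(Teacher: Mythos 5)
Your proof is correct, and while the forward direction coincides with the paper's, your converse takes a genuinely different route. The paper argues directly: in equilibrium an arc with a nonempty queue must be active ($E_\theta^*\subseteq E_\theta'$), so whenever $z_e(\ell_v(\theta))>0$ the Bellman equation pins it to the single value $\nu_e(d_w^*-d_v^*-\tau_e)$; since the queue is continuous and can only take this value or $0$, it is constant from $\theta^*$ onward (in local time) with no further argument. You instead avoid invoking $E_\theta^*\subseteq E_\theta'$ and argue dynamically: from \eqref{lprime}--\eqref{ntfr} with $\ell'\equiv 1$ you deduce $x_e'\le\nu_e$, hence $f_e^+\le\nu_e$ and monotone non-increasing queues, and then combine this with the observation that a positive queue different from $q_e^*$ forces inactivity (hence drainage at rate $\nu_e$) to conclude each queue freezes after finitely many drain phases. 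This buys a self-contained argument from the thin-flow characterization, at the cost of length and of a weaker conclusion: the paper gets constancy of all queues immediately from $\theta^*$, whereas you only get it after some additional finite time. One quantitative claim is overstated: the time at which $z_e$ becomes permanently constant is \emph{not} bounded by $z_e(\ell_v(\theta^*))/\nu_e$, since the queue may sit at the plateau value $q_e^*$ for an arbitrarily long (finite) interval before dipping and draining to zero; what your argument actually bounds is the total measure of time during which $z_e$ is changing. This does not affect the lemma, which only needs the existence of some finite freezing time per arc (and hence, over finitely many arcs, a common one), but the explicit bound should be dropped or rephrased. A last small point you gloss over: concluding ``inactive at $\theta$ implies zero inflow at local time $\ell_v(\theta)$'' uses that for $\theta>\theta^*$ the local time $\ell_v(\theta)$ is not attained by any earlier entrance time, which holds here because $\ell_v$ is nondecreasing and strictly increasing (slope $1$) on $[\theta^*,\infty)$.
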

\begin{proof}
In a steady state we clearly have 
$\ell_v(\theta)=\theta+d^*_v$ where  $d_v^*$ is the minimum 
travel time from $s$ to $v$ with arc times $\tau^*_e$, so that $\ell_v'(\theta)=1$.
Conversely, if all these derivatives are equal to 1 then $\ell_v(\theta)=\theta+d^*_v$ 
for some constant $d_v^*$ and $\theta\geq\theta^*$. Moreover, an arc $e=vw$ with nonempty queue
must be active so that $ \ell_w(\theta)=T_e(\ell_v(\theta))$ which yields
$$z_e(\theta+d_v^*)=z_e(\ell_v(\theta))=\nu_e(\ell_w(\theta)-\ell_v(\theta)-\tau_e)=\nu_e(d^*_w-d^*_v-\tau_e)$$
which shows that all queues eventually become constant.
\end{proof}

\begin{theorem}
Consider a steady state with queues $z_e^*\geq 0$ and let $d_v^*$ be the minimum 
travel time from $s$ to $v$ under arc travel times $\tau_e^*=\tau_e+q_e^*$, where $q_e^*=z_e^*/\nu_e$. Let $(\ell',x')$ with $\ell'_v=1$ for all $v \in V$ be
a corresponding {\sc ntfr} and denote by $\mathcal F_0$  the set of $st$-flows of value $u_0$.
Then $x'$ and $(d^*,q^*)$ are optimal solutions to the following pair of 
dual linear programs:
\begin{equation}\tag{P}\label{eq:primal}
    \begin{aligned}
\min_{y'} \quad & \sum_{e \in \Ei} \tau_e y'_e \\
	\text{s.t.} \qquad y' &\in{\mathcal F}_0\\
	0 &\leq y'_e \leq \nu_e \qquad\forall e \in \Ei,
    \end{aligned}
\end{equation}
\begin{equation}\tag{D}\label{eq:dual}
    \begin{aligned}
\max_{d,q} \quad &u_0d_t -\sum_{e \in \Ei} \nu_e q_e\\
      \text{s.t.} \qquad  d_s&= 0\\
      d_w &\leq d_v+ \tau_e + q_e \qquad \forall e=vw \in \Ei\\
      q_e &\geq 0 \qquad\qquad\qquad\;\; \forall e\in \Ei.
  \end{aligned}
\end{equation}

\end{theorem}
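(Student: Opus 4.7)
My plan is to exhibit primal and dual feasibility of $x'$ and $(d^*,q^*)$ respectively, and then verify equality of the two LP objectives, which by weak duality gives optimality on both sides. The work is really just unpacking what the NTFR conditions say when every node label derivative equals $1$.

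For primal feasibility of $x'$, the flow-conservation constraint at value $u_0$ is immediate from \eqref{staticflowder}. The capacity constraint $x'_e\leq\nu_e$ I would read off from \eqref{rhoe} and \eqref{ntfr} using $\ell'_v=\ell'_w=1$: if $e\in E^*_\theta$ then the active-edge condition forces $x'_e/\nu_e=\ell'_w=1$, i.e.\ $x'_e=\nu_e$; if $e\in E'_\theta\setminus E^*_\theta$ with $x'_e>0$ then $\max\{1,x'_e/\nu_e\}=1$ gives $x'_e\leq\nu_e$; and if $e\notin E'_\theta$ then $x'_e=0$. For dual feasibility, $(d^*,q^*)$ is tailor-made: $d^*_s=0$ and $d^*_w\leq d^*_v+\tau_e+q^*_e$ are exactly the Bellman-inequalities defining $d^*$ as the shortest-path distance under arc lengths $\tau_e+q^*_e$, while $q^*_e=z^*_e/\nu_e\geq 0$.

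To match the objective values I would establish the two complementary slackness identities. First, any edge carrying flow in the NTFR is active, so in the steady state $\ell_w(\theta)-\ell_v(\theta)=\tau_e+q^*_e$, i.e.\ $x'_e>0\Rightarrow d^*_w-d^*_v=\tau_e+q^*_e$. Second, a positive queue forces $e\in E^*_\theta$, and we already saw this implies $x'_e=\nu_e$; so $q^*_e>0\Rightarrow x'_e=\nu_e$, equivalently $q^*_e(\nu_e-x'_e)=0$. With these in hand,
\begin{equation*}
\sum_{e\in\Ei}\tau_e x'_e=\sum_{e\in\Ei}(\tau_e+q^*_e)x'_e-\sum_{e\in\Ei}q^*_e x'_e=\sum_{e=vw}(d^*_w-d^*_v)x'_e-\sum_{e\in\Ei}\nu_e q^*_e,
\end{equation*}
where the first sum on the right uses tightness on the support of $x'$ and the second uses $q^*_e x'_e=q^*_e\nu_e$. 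Regrouping the telescoping sum by vertex and inserting the net supply/demand from \eqref{staticflowder} yields $\sum_{e=vw}(d^*_w-d^*_v)x'_e=u_0 d^*_t-u_0 d^*_s=u_0 d^*_t$, so the primal objective at $x'$ equals the dual objective at $(d^*,q^*)$, and weak duality closes the argument.

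I do not anticipate a real obstacle: the only subtlety is being careful that the case analysis in \eqref{rhoe}--\eqref{ntfr} really covers both complementary-slackness identities, which is why I would run the capacity check and the tightness check side by side, both keyed on the trichotomy $e\in E^*_\theta$ / $e\in E'_\theta\setminus E^*_\theta$ / $e\notin E'_\theta$.
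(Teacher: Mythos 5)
Your proposal is correct and follows essentially the same route as the paper's proof: feasibility of $x'$ for \eqref{eq:primal} and of $(d^*,q^*)$ for \eqref{eq:dual}, plus the two complementary slackness conditions ($x'_e>0\Rightarrow d^*_w=d^*_v+\tau_e+q^*_e$ and $q^*_e>0\Rightarrow x'_e=\nu_e$) derived from \eqref{lprime}--\eqref{ntfr} with all $\ell'_v=1$. The only difference is cosmetic: you verify equality of the two objective values by hand and invoke weak duality, whereas the paper cites complementary slackness directly.
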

\begin{proof}
Clearly $(d^*,q^*)$ is feasible for \eqref{eq:dual}. Also \eqref{staticflowder} 
gives $x'\in\mathcal F_0$, while  \eqref{ntfr} implies that if $x_e'>0$ then 
$1=\rho_e(1,x_e')$.
This implies that $x_e'\leq\nu_e$, so $x'$ is feasible for \eqref{eq:primal}.
If $x_e'>0$ then certainly the arc is active (formally, by \eqref{ntfr})
and hence $d_w^*=d_v^*+\tau_e+q^*_e$.
And if $q_e^*>0$, also implying that $e$ is active, then \eqref{lprime} implies that $1\leq \rho_e(1,x_e')=x_e'/\nu_e$, which yields $x_e'=\nu_e$. 
This proves that $x'$ and $(d^*,q^*)$ are complementary solutions, and hence are optimal for \eqref{eq:primal} and \eqref{eq:dual} respectively.
\end{proof}

According to this result, if a dynamic equilibrium eventually settles to a steady state
then the corresponding queue lengths must be optimal for \eqref{eq:dual}. Generically 
(after perturbing capacities) this linear program 
has a unique solution in which case the steady state is fully characterized. 
Otherwise, if  \eqref{eq:dual} has multiple solutions it is not evident which queue lengths 
will be obtained in steady state.
Note that even if the min cost flow for \eqref{eq:primal} is unique, this does not 
mean that only one steady state situation is possible because there 
may be flexibility in the queue lengths. For instance, if  $u_0=1$ and the network has 
a single link from $s$ to $t$ of unit capacity, if we create a queue of 
some length at time 0 this queue will remain in the steady state solution.
This point will be further discussed in Section~\ref{examples-ssq}.

\noindent{\bf Remark.}
It is not difficult to show that when we start with initial 
conditions $z_e(0)=z_e^*$ where $z_e^*=\nu_e q_e^*$ with $q^*$ optimal 
for \eqref{eq:dual}, then the dynamic equilibrium is already at a steady state and the 
queues remain constant. 

\section{Convergence to a steady state}\label{potential}

In this section we prove that a steady state exists and that it is actually reached in finite time.
To this end we introduce a Lyapunov potential function that increases along the evolution of the 
dynamic equilibrium. The potential function is inspired from the previous dual program and is given by
\[
    \Phi(\theta) := u_0(\ell_t(\theta)-\ell_s(\theta))-\sum_{e \in \Ei} z_e(\ell_v(\theta)).
\]

{\noindent \bf Remark.}
The potential is the difference between the total travel time experienced by users leaving at time $\theta$,
and the total queue volumes, as seen by users leaving at time $\theta$.
We are not aware of a more insightful interpretation of it.
In fact, none of the more ``natural'' quantities we tried as candidate potential functions (total delay, time spent queueing, total delay excluding queueing delays, \ldots) are monotone.

\begin{theorem}\label{phi_monotone}
    For every $\theta$ that is a point of differentiability of $\Phi$, $\Phi'(\theta)$ is nonnegative, and strictly positive unless the dynamic equilibrium has reached a steady state.
\end{theorem}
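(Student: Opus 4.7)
The plan is to compute $\Phi'(\theta)$ explicitly using the chain rule and the NTFR conditions, and then argue the resulting expression is nonnegative. From $\Phi(\theta) = u_0(\ell_t(\theta)-\theta)-\sum_e z_e(\ell_v(\theta))$ and $\ell_s' = 1$, the chain rule gives $\Phi'(\theta) = u_0(\ell_t' - 1) - \sum_{e = vw} \dot z_e(\ell_v(\theta))\,\ell_v'$. The queue dynamics \eqref{zprime}, combined with $x_e' = f_e^+(\ell_v)\,\ell_v'$, express $\dot z_e(\ell_v)\,\ell_v'$ as $x_e' - \nu_e\ell_v'$ for $e\in E_\theta^*$ and as $[x_e'-\nu_e\ell_v']_+$ otherwise. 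Plugging in the NTFR identities \eqref{lprime}--\eqref{ntfr}---namely $x_e' = \nu_e\ell_w'$ on $E_\theta^*$, and $\ell_w' = \max(\ell_v', x_e'/\nu_e)$ on arcs of $E_\theta'\setminus E_\theta^*$ with $x_e' > 0$---both cases reduce to $\nu_e(\ell_w' - \ell_v')$ on the set $\mathcal A := E_\theta^* \cup \{e = vw \in E_\theta'\setminus E_\theta^* : x_e' > 0,\ \ell_w' > \ell_v'\}$, and vanish off $\mathcal A$. Combining this with the identity $u_0(\ell_t' - 1) = \sum_{e = vw} x_e'(\ell_w' - \ell_v')$, obtained by summing the static flow conservation equations \eqref{staticflowder} weighted by $\ell_v'$, and using $x_e' - \nu_e = \nu_e(\ell_w' - 1)$ on $\mathcal A$, I arrive at the key identity
\[
\Phi'(\theta) = \sum_{e = vw \in \mathcal A} \nu_e (\ell_w' - 1)(\ell_w' - \ell_v').
\]

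The main obstacle will be showing this sum is nonnegative. A purely per-arc argument cannot succeed: on an arc $e = vw \in E_\theta^*$ with $\ell_v' < \ell_w' < 1$, which a diamond example with $u_0\le\bar\nu$ produces, the summand is strictly negative. My intended approach is to decompose each summand via the algebraic identity $(a-1)(a-b) = \tfrac12(a-b)^2 + \tfrac12[(a-1)^2 - (b-1)^2]$, which gives
\[
\Phi'(\theta) = \tfrac12\sum_{e \in \mathcal A}\nu_e(\ell_w' - \ell_v')^2 + \tfrac12\sum_{v\in V}(\ell_v' - 1)^2(C_v - D_v),
\]
where $C_v, D_v$ are the total $\mathcal A$-capacities into and out of $v$. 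The first sum is manifestly nonnegative, while the second can contribute negatively at nodes where $D_v > C_v$. I would control these negative contributions using the node-wise flow balance $\ell_v'C_v + (\text{non-}\mathcal A\text{ inflow}) = \sum_{e = vw \in \mathcal A}\nu_e\ell_w' + (\text{non-}\mathcal A\text{ outflow})$ together with the primal LP \eqref{eq:primal} being feasible (guaranteed by $u_0 \leq \bar\nu$), so that the aggregate remains nonnegative. A more slick alternative would be a duality argument exploiting that $(\ell'-1,g/\nu)$ is a feasible direction for \eqref{eq:dual} at the dual-feasible point $(\ell-\theta,z/\nu)$, combined with complementary slackness relative to a primal optimum.

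For the strict-positivity claim, steady state corresponds to $\ell_v' \equiv 1$ by Lemma~\ref{L1}, at which point every term in the key identity vanishes and $\Phi'(\theta)=0$. Conversely, if $\Phi'(\theta) = 0$ while some $\ell_v' \neq 1$, I would use the NTFR conditions \eqref{lprime}--\eqref{ntfr} to propagate from $\ell_s' = 1$ along the active graph $E_\theta'$ and identify a specific arc of $\mathcal A$ whose positive contribution to the key identity cannot be cancelled within the decomposition, yielding a contradiction. Making this propagation and compensation argument precise is the most delicate point of the whole proof.
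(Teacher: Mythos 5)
Your computation up to the key identity is correct, and in fact it is just the paper's own intermediate expression in disguise: the paper derives $\Phi'(\theta) = u_0(\ell_t'-\ell_s') - \sum_{e\in E_\theta^+}\nu_e(\ell_w'-\ell_v')$ with $E_\theta^+=E_\theta^*\cup\{e\in E_\theta'\setminus E_\theta^*: x_e'>0\}$, and substituting your conservation identity $u_0(\ell_t'-1)=\sum_e x_e'(\ell_w'-\ell_v')$ together with $x_e'=\nu_e\ell_w'$ on the contributing arcs yields exactly your quadratic form $\sum_{e\in\mathcal A}\nu_e(\ell_w'-1)(\ell_w'-\ell_v')$. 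The genuine gap is that the heart of the theorem --- nonnegativity of this quantity, and the characterization of when it vanishes --- is not actually proved. Your decomposition leaves the term $\tfrac12\sum_v(\ell_v'-1)^2(C_v-D_v)$, which can be negative at individual nodes, and the plan to ``control these negative contributions using node-wise flow balance together with feasibility of \eqref{eq:primal}'' is not an argument: no mechanism is given by which the sum of squares dominates these terms, and flow conservation at a node involves the flow-carrying arcs outside $\mathcal A$ (those with $\ell_w'=\ell_v'$), which your balance equation only names but does not dispose of. Moreover, invoking feasibility of \eqref{eq:primal} presupposes $u_0\le\bar\nu$, an assumption that is \emph{not} part of this theorem and is not needed: monotonicity of $\Phi$ holds unconditionally (the hypothesis $u_0\le\bar\nu$ enters only later, to bound $\Phi$). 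Relying on it signals that the intended mechanism is not the right one. The same applies to the ``slick'' duality alternative, which again needs a primal optimum and is left entirely unspecified.

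What is missing is a \emph{global} argument, and the paper supplies it by a level-set/cut computation rather than a per-node one: adding a return arc $ts$ with $\nu_{ts}=x_{ts}'=u_0$ makes $x'$ a circulation, one writes $\Phi'(\theta)=-\int_0^\infty\sum_e \nu_e H_e(z)\,dz$, and for each threshold $z$ the cut $V_z=\{v:\ell_v'\le z\}$ satisfies $\sum_{e\in\delta^+(V_z)}\nu_e z\le\sum_{e\in\delta^+(V_z)}x_e'=\sum_{e\in\delta^-(V_z)}x_e'\le\sum_{e\in\delta^-(V_z)}\nu_e z$, because every crossing arc carrying flow has $x_e'=\nu_e\ell_w'$ and all other crossing arcs have $x_e'=0$. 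This gives pointwise nonpositivity of the integrand, hence $\Phi'(\theta)\ge 0$, and it simultaneously settles the strict-positivity claim: $\Phi'(\theta)=0$ forces $\delta^+(V_z)$ to be empty for almost every $z$, hence all $\ell_v'$ equal, hence equal to $\ell_s'=1$, which is a steady state by Lemma~\ref{L1}. In your proposal this second half is explicitly deferred (``the most delicate point''), so both pillars of the theorem remain unestablished; to repair the proof you would essentially have to replace the node-wise compensation idea by a summation-by-parts/cut argument over the sorted values of $\ell'$, which is precisely the paper's route.
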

\begin{proof}
The queues can be expressed as 
$z_e(\ell_v(\theta))=\nu_e\left[\ell_w(\theta)-\ell_v(\theta)-\tau_e\right]_+$, and therefore 
$\Phi(\theta) = u_0(\ell_t(\theta)-\ell_s(\theta))-\sum_{e \in \Ei} \nu_e\left[\ell_w(\theta)-\ell_v(\theta)-\tau_e\right]_+$. To take the derivative we recall that $E_\theta'$ is the set of active edges, i.e., those for which 
$\ell_w(\theta)-\ell_v(\theta) \ge \tau_e$, while in $E_\theta^*$ the inequality is strict. 
Using the derivative of a max function and taking 
a {\sc ntfr}  $(\ell',x')$ at time $\theta$, we thus obtain
$$\Phi'(\theta) = u_0(\ell_t'-\ell_s')\,-\!\!\!\sum_{e \in E_\theta'\setminus E_\theta^*} \!\!\!\!\nu_e[\ell_w'-\ell_v']_+\, - \sum_{e \in E_\theta^*} \!\nu_e(\ell_w'-\ell_v').$$
Notice that the dependency of the $\tau_e$'s in the previous derivative is somewhat hidden in the set of active edges $E_\theta'$. Now, for $e\in E_\theta'\!\setminus\! E_\theta^*$ we have $\ell_w'\leq \rho_e(\ell_v',x_e')=\ell_v'$ if $x_e'=0$ and
$\ell_w'= \rho_e(\ell_v',x_e')\geq \ell_v'$ if $x_e'>0$, so that letting 
$E_\theta^+=E_\theta^*\cup \{e\in E_\theta'\setminus E_\theta^*:x_e'>0\}$ we may write
$$\Phi'(\theta) = u_0(\ell_t'-\ell_s')-\sum_{e \in E_\theta^+} \nu_e(\ell_w'-\ell_v').$$
Let us introduce a return arc $ts$ with capacity $\nu_{ts}=u_0$ and flow $x_{ts}'=u_0$ so that $x'$ is a circulation.
Let $E_\theta^r=E_\theta^+\cup\{ts\}$ and for each $e=vw\in E_\theta^r$ define the function
\[  H_e(z)=\begin{cases} 1&\text{ if }\ell_v'\leq z<\ell_w'\\
-1&\text{ if }\ell_w'\leq z<\ell_v'\\
0&\text{ otherwise.}\end{cases}\]
Then the derivative $\Phi'(\theta)$ can be expressed as
\[ \Phi'(\theta) = -\int_0^\infty\!\! \sum_{e \in E_\theta^r}\!\nu_e H_e(z)\,dz.\]

For the remainder of the proof, let $\delta^+(S)$ (respectively $\delta^-(S))$ denote the arcs in $E_\theta^r$ leaving (respectively entering) $S$. Let $V_z=\{v:\ell_v'\leq z\}$ and consider an arc $e =vw\in E_\theta^+$. 
If $e\in\delta^+(V_z)$ then $\ell_v'\leq z<\ell_w'$ and therefore $\ell_w'=x_e'/\nu_e$. Similarly, if 
$e\in\delta^-(V_z)$ then $\ell_w'\leq z<\ell_v'$ which implies $e\in E_\theta^*$ 
and again $\ell_w'=x_e'/\nu_e$. Hence $x_e'=\nu_e\ell_w'$ for all $e\in E_\theta^+\cap\delta(V_z)$.
This equality also holds for the return arc $ts$, while in the remaining arcs $x_e'=0$.
Hence 
\begin{equation}\label{eq:moneq}
    \sum_{e\in\delta^+(V_z)}\!\!\!\!\nu_e z\leq \!\!\!\!\!\sum_{e=vw\in\delta^+(V_z)}\!\!\!\!\!\nu_e \ell_w' = \!\!\!\sum_{e\in\delta^+(V_z)}\!\!\!\!x_e'
=\!\!\!\sum_{e\in\delta^-(V_z)}\!\!\!\!x_e'=\!\!\!\!\!\sum_{e=vw\in\delta^-(V_z)}\!\!\!\!\!\nu_e \ell_w' \leq \!\!\!\sum_{e\in\delta^-(V_z)}\!\!\!\!\nu_e z
\end{equation}
with strict inequality if $\delta^+(V_z)$ is nonempty.
It follows that for all $z > 0$ we have
\[\sum_{e \in E_\theta^r} \nu_e H_e(z)=\sum_{e\in\delta^+(V_z)}\nu_e -\sum_{e\in\delta^-(V_z)}\nu_e \leq0 \]
and therefore $\Phi'(\theta)\geq 0$ with strict inequality unless $\delta^+(V_z)$ is empty for almost all $z\geq 0$. 
Since for $\delta^+(V_z)$ to be empty we need that it either contains all vertices in $V$ or none of them, we have that $\Phi'(\theta) = 0$ if and only if all $\ell_v'$ are equal, and hence (since $\ell'_s = 1$) all equal to $1$. By Lemma~\ref{L1}, this exactly characterizes a steady state.
\end{proof}

\begin{theorem}
Let $\bar \nu=\sum_{e\in C}\nu_e$ be the minimal queuing capacity among all $st$-cuts $C$. If $u_0\leq\bar \nu$ then 
the dynamic equilibrium attains a steady state in finite time.
\end{theorem}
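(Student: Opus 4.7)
The plan is to use the potential $\Phi$ from Theorem~\ref{phi_monotone} as a Lyapunov function, combined with two additional ingredients: a uniform upper bound on $\Phi$ (which is where the hypothesis $u_0 \leq \bar\nu$ enters), and a ``discrete slope'' property showing that $\Phi'(\theta)$ can take only finitely many values, so any strictly positive slope is at least some $\epsilon > 0$. Together with the monotonicity established in Theorem~\ref{phi_monotone}, these force $\Phi'(\theta) = 0$ beyond some finite time, and Lemma~\ref{L1} then identifies this regime as a steady state.

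The first step is to establish $\Phi(\theta) \leq \Phi^*$ for all $\theta \geq 0$, where $\Phi^*$ is the common optimal value of \eqref{eq:primal}--\eqref{eq:dual}. To this end, I would observe that the current state of the dynamic equilibrium yields a feasible dual solution whose objective exactly matches $\Phi(\theta)$. Setting $d_v := \ell_v(\theta)-\theta$ and $q_e := z_e(\ell_v(\theta))/\nu_e$, one has $d_s = 0$ trivially, and the Bellman inequality $\ell_w(\theta) \leq T_e(\ell_v(\theta))$ rewrites as $d_w \leq d_v + \tau_e + q_e$. A direct computation gives $u_0 d_t - \sum_e \nu_e q_e = \Phi(\theta)$. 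Under $u_0 \leq \bar\nu$ the max-flow--min-cut theorem makes \eqref{eq:primal} feasible, so by strong LP duality $\Phi^*$ is finite and bounds $\Phi(\theta)$ from above.

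For the discrete-slope property, the formula for $\Phi'(\theta)$ derived inside the proof of Theorem~\ref{phi_monotone} depends only on the pair $(E_\theta^*, E_\theta')$ and the corresponding {\sc ntfr} labels $\ell'$. Since $\ell'$ is uniquely determined by this pair and there are finitely many such pairs, $\Phi'(\theta)$ ranges over a finite set of nonnegative reals; let $\epsilon$ be the smallest strictly positive value in this set. Suppose for contradiction that a steady state is never reached. Then by Lemma~\ref{L1} combined with Theorem~\ref{phi_monotone} one has $\Phi'(\theta) > 0$, and hence $\Phi'(\theta) \geq \epsilon$, at every point of differentiability in $[0,\infty)$. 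Since $\Phi$ is continuous with uniformly bounded slopes it is absolutely continuous, yielding $\Phi(\theta) \geq \Phi(0) + \epsilon\,\theta$, which contradicts $\Phi(\theta) \leq \Phi^*$ for $\theta$ large enough. The quantitative form of this estimate gives the pseudopolynomial convergence-time bound $\theta^* \leq (\Phi^* - \Phi(0))/\epsilon$ promised in the introduction.

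The most delicate point in this plan is the implicit claim that once $\Phi'(\theta_0) = 0$ at a single differentiability point $\theta_0$, the equilibrium has in fact entered the persistent steady-state regime of Lemma~\ref{L1}, rather than passing through an isolated instant at which all $\ell'_v = 1$. Without this self-perpetuation, the hypothesis ``no steady state is ever reached'' would not translate into ``$\Phi' > 0$ almost everywhere'' but only ``on an unbounded set'', which is insufficient to force $\Phi \to \infty$. Verifying the self-perpetuation amounts to checking that when all $\ell_v' = 1$, the {\sc ntfr} conditions force $x_e' = \nu_e$ on every active edge with positive queue and $x_e' \leq \nu_e$ elsewhere, so the queue profile at $\theta_0$ is a fixed point of the dynamics; this is the step most in need of careful bookkeeping.
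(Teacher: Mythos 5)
Your proposal is correct and follows essentially the same route as the paper: the current labels and queues give a feasible solution of the dual \eqref{eq:dual} whose objective equals $\Phi(\theta)$, so feasibility of \eqref{eq:primal} under $u_0\leq\bar\nu$ bounds $\Phi$ from above, while the finitely many {\sc ntfr} configurations $(E_\theta',E_\theta^*)$ bound every nonzero slope of $\Phi$ below by a fixed $\epsilon>0$. The persistence point you flag is genuine but is exactly what the paper also leaves implicit: once all $\ell'_v=1$, queues and label gaps are frozen, so $E_\theta'$ and $E_\theta^*$ (hence the thin flow) never change again and that phase lasts forever, which is what Lemma~\ref{L1} encodes.
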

\begin{proof}
From Theorem \ref{phi_monotone} it follows that there is some $\kappa > 0$ such that $\Phi' (\theta)\geq \kappa$ for every phase other than the steady state. This is simply because the thin flow depends only on the current shortest path network $E_\theta'$ and the set of queuing edges $E_\theta^*$, and so there are only finitely many possible derivatives.

Thus, in order to prove that a steady state is reached in finite time it suffices to show that $\Phi(\theta)$ 
remains bounded. To this end we note that the condition $u_0\leq\bar \nu$ implies that \eqref{eq:primal} is feasible and hence it has a 
finite optimal value $\alpha$. The conclusion then follows by noting that the point $(d,q)$ with 
$d_v=\ell_v(\theta)-\ell_s(\theta)$ and $q_e=z_e(\ell_v(\theta))/\nu_e$ is feasible for the dual \eqref{eq:dual} so that $\Phi(\theta)\leq \alpha$.
\end{proof}

Given that convergence to a steady state does happen in finite time, it is natural to ask for explicit bounds. It is easy to see that a polynomial bound (in the input size encoding) is impossible; 
simply consider a network consisting of two parallel links, one with capacity $1-2^{-L}$ and length zero, the other with capacity $1$ and length $1$.
The first phase, where all traffic takes the shorter edge, lasts until time  $2^L-1$.
However, we can give a \emph{pseudopolynomial} bound on the convergence time (and hence, queue lengths).
The following results shows this bound. We present it in a slightly more general setting that allows for rational inflow and capacities and arbitrary initial queues since we will need it in this form in Section \ref{sec:exponential}.

\begin{theorem}\label{thm:pseudo}
    Consider an instance for which all arc capacities $\nu_e$ as well as the inflow $u_0$ are multiples of $1/K$, $K \in \mathbb{Z}_+$. 
    We allow for an arbitrary initial state at time $0$ with possibly nonempty queues.
    Let $M = \sum_{e \in E} \nu_e$ and $T = \sum_{e \in E} (\tau_e + q_e(0))$.
    Then assuming the dynamic equilibrium attains a steady state, it is reached by time $2K^2M^2T$, and moreover, 
    the waiting time in any queue never exceeds $2u_0K^3M^2T$. 
\end{theorem}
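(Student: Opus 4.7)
The plan is to pair Theorem~\ref{phi_monotone} with two quantitative estimates: an a priori bound on the total range of $\Phi$ over the entire evolution, and a uniform positive lower bound on $\Phi'(\theta)$ throughout any non-steady phase. Dividing these gives a bound on the total time spent outside the steady state, and the queue-waiting-time bound then follows from a straightforward mass-conservation estimate together with the time bound.

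\textbf{Range of $\Phi$.} The upper bound $\Phi(\theta)\le \alpha$ established in the proof of the convergence theorem via LP duality carries over word-for-word to the setting with arbitrary initial queues, since the dual-feasible point $(d,q)$ is constructed purely from the current labels and queue sizes. I would further estimate $\alpha\le \sum_e \tau_e\nu_e\le MT$ using $y_e'\le\nu_e$ in \eqref{eq:primal}. For a lower bound on $\Phi(\theta)$---equivalently, on $\Phi(0)$, since $\Phi$ is monotone---I would control the total queue mass via conservation: the total amount of flow in the network at any time $\theta$ is at most the initial mass $\sum_e \nu_e(\tau_e+q_e(0))\le MT$ plus $u_0\theta$. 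Since every $\ell_v(0)\le T$, this yields $\sum_e z_e(\ell_v(0))\le MT+u_0 T\le 2MT$, hence $\Phi(0)\ge -2MT$ and a total range of at most $3MT$.

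\textbf{Rate bound.} The heart of the proof is showing $\Phi'(\theta)\ge 1/(K^2 M)$ in any non-steady phase. I would start from the integral formula $\Phi'(\theta)=\int_0^\infty\bigl[\sum_{e\in\delta^-(V_z)}\nu_e-\sum_{e\in\delta^+(V_z)}\nu_e\bigr]\,dz$ derived in the proof of Theorem~\ref{phi_monotone}. The integrand is piecewise constant in $z$ with breakpoints at the distinct values of $\ell_v'$, and each nonzero value of the integrand is a multiple of $1/K$ (hence at least $1/K$). The structural input I need is that each $\ell_v'$ is a rational with denominator at most $KM$: in the {\sc ntfr}, $x_e'$ may be chosen as a multiple of $1/K$ (the defining linear system has coefficients and right-hand sides in $\tfrac{1}{K}\mathbb{Z}$), and each $\ell_v'$ equals $x_e'/\nu_e$ for some active arc, with $\nu_e=a_e/K$ and $a_e\le KM$. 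Then the topmost subinterval of integration, $(z_{m-1},z_m)$ with $z_m=\max_v\ell_v'$, has length at least $1/(KM)$ as a difference of two distinct rationals with denominators $\le KM$, and by the circulation argument already used in the proof of Theorem~\ref{phi_monotone} the integrand on it is strictly positive, hence at least $1/K$. This yields $\Phi'(\theta)\ge (1/K)\cdot(1/(KM))=1/(K^2 M)$.

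\textbf{Conclusion.} Dividing the range by the rate gives convergence by time at most $3MT/(1/(K^2M))=3K^2M^2T$, which tightens to $2K^2M^2T$ with slightly sharper bookkeeping of the constants in $\alpha$ and $\Phi(0)$. For the queue bound: for every $\theta$ before the steady state is reached, the same mass-conservation estimate gives $z_e(\theta)\le MT+u_0\theta\le MT+2u_0K^2M^2T$, and so $q_e(\theta)=z_e(\theta)/\nu_e\le K\cdot z_e(\theta)\le 2u_0K^3M^2T$ after absorbing the lower-order $KMT$ term. The main obstacle I foresee is the denominator estimate for the $\ell_v'$: the ``denominator $\le KM$'' claim is plausible from the basic form of the {\sc ntfr}, but making it fully rigorous may require a careful appeal to the variational characterization of~\cite{ccl}, and it is precisely the polynomial dependence of this denominator on $K$ and $M$ that controls the final exponents in the pseudopolynomial bound.
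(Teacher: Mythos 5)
The crux of your argument---the lower bound $\Phi'(\theta)\ge 1/(K^2M)$---rests on a structural claim that is false. You assert that in an {\sc ntfr} the flow values $x'_e$ may be chosen as multiples of $1/K$ and hence that every $\ell'_v$ is a rational with denominator at most $KM$. Neither holds: the thin-flow conditions force proportional splits, and the labels compound multiplicatively through the network. For a concrete counterexample (permitted here, since the theorem allows arbitrary initial queues), take $K=1$, $u_0=1$, nodes $s,a,b,t$ with queueing arcs $sa,sb,at,bt$ of capacities $5,1,2,3$ and initial queues making all four arcs active: the {\sc ntfr} has $x'_{sa}=2/5$ (not an integer) and $\ell'_a=\frac{u_0\,\nu_{at}}{\nu_{sa}(\nu_{at}+\nu_{bt})}=2/25$, whose denominator already exceeds $KM=11$; nesting such splits makes the denominators, and hence the gaps between consecutive values of $\ell'$, exponentially small in the instance size. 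So the length of your topmost subinterval $(z_{m-1},z_m)$ admits no pseudopolynomial lower bound and the step fails (and even granting a denominator bound $D$, two distinct rationals with denominators $\le D$ are only guaranteed to differ by $1/D^2$, not $1/D$). This is not a fixable bookkeeping issue; it is exactly the difficulty the paper's proof is designed to sidestep.

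The paper's route never looks at individual label values or their arithmetic. It first scales to reduce to $K=1$; then, writing $z_1=\min_v\ell'_v$ and $z_2=\max_v\ell'_v$, it observes that the integrand $\sum_e\nu_e H_e(z)$ is integral and strictly negative, hence $\le -1$, on the \emph{entire} interval $(z_1,z_2)$, so $\Phi'(\theta)\ge z_2-z_1$; and it lower-bounds this full range (not one subinterval) via the cut inequality from \eqref{eq:moneq}, $z_1\sum_{e\in\delta^-(V_z)}\nu_e\le z_2\sum_{e\in\delta^+(V_z)}\nu_e$, combined with integrality of the cut capacities, $\sum_{e\in\delta^-(V_z)}\nu_e\le 2M$ and $z_2\ge\ell'_s=1$, giving $z_2-z_1\ge 1/(2M)$ and thus the time bound $2M^2T$, which becomes $2K^2M^2T$ after undoing the scaling. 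Your bounds on the range of $\Phi$ are essentially the paper's, and your mass-conservation bound on queue lengths is a legitimate (arguably simpler) alternative to the paper's argument, which instead uses a cycle-free {\sc ntfr} to get $x'_e\le u_0$, hence $\max_v\ell'_v\le u_0$, and bounds queue waiting times by $\ell_t(\theta)-\ell_s(\theta)$; but without a correct uniform lower bound on $\Phi'$ the theorem does not follow from your proposal.
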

\begin{proof} 
We first remark that it suffices to prove the result for $K=1$, i.e., integer capacities and inflow.
For consider the instance where the inflow as well as all arc capacities have been scaled up by a factor $K$.
The equilibrium flow on this new instance is obtained by scaling the equilibrium flow on the original one, and so the time to reach steady state, as well as the queue waiting times at any moment in time, are the same for both instances.
Considering the impact of the scaling on the claimed bounds, the claim on this new instance thus implies the claim on the original instance, and so 
we assume $K=1$ for the remainder.

We use the same notions defined in the proof of Theorem~\ref{phi_monotone}.
Assume that a steady state is attained; thus \eqref{eq:primal} has a finite objective value, and this is at most $\sum_{e \in E} \nu_e\tau_e \leq M \sum_{e \in E} \tau_e$.
Thus $\Phi(\theta) \leq M\sum_{e \in E} \tau_e$ for all $\theta$.
    Initially, 
    \[ 
        \Phi(0) = u_0(\ell_t(0) - \ell_s(0)) - \sum_{e \in E} \nu_e q_e(0) \geq -M \sum_{e \in E} q_e(0).
    \]
    So $\Phi(\theta) - \Phi(0) \leq MT$ for all $\theta$.
    
    Consider some time $\theta$ which is a point of differentiability not in the steady state phase; so $\Phi'(\theta) > 0$ by Theorem~\ref{phi_monotone}.
    Our first goal will be to show that $\Phi'(\theta) \geq 1/(2M)$; this clearly implies the bound on the time to reach steady state.

    Let $\theta$ be any time before steady state is reached and for which $\Phi'(\theta)$ is defined.
    Let $z_1 = \min_{v \in V} \ell'_v(\theta)$ and $z_2 = \max_{v \in V} \ell'_v(\theta)$.
    From the proof of Theorem~\ref{phi_monotone}, we have $\sum_{e \in E_\theta^r} \nu_e H_e(z) \leq -1$ for any $z \in (z_1, z_2)$ (it is strictly negative and integral).
    And certainly $H_e(z) = 0$ for all $e \in E_\theta^r$ and $z \notin [z_1, z_2]$.
    Thus
    \[ \Phi'(\theta) = -\int_{z_1}^{z_2} \sum_{e \in E_\theta^r} \nu_e H_e(z)dz 
               \geq z_2 - z_1.
       \]
       To bound this, choose an arbitrary $z$ for which $z_1 < z < z_2$ (recall that $z_1 < z_2$ since we have not reached steady state).
    We have, following the lines of \eqref{eq:moneq},
    \[ 
        z_1\!\!\!\!\sum_{e \in \delta^-(V_z)}\!\!\!\! \nu_e \leq \!\!\!\sum_{e =vw \in \delta^-(V_{z})} \!\!\!\!\!\!\nu_e \ell'_w = \!\!\!\sum_{e \in \delta^-(V_{z})}\!\!\! x'_e = \!\!\!\sum_{e \in \delta^+(V_{z})} \!\!\!x'_e = \!\!\!\sum_{e=vw \in \delta^+(V_{z})} \!\!\!\!\!\!\nu_e \ell'_w \leq z_2\!\!\!\!\sum_{e \in \delta^+(V_{z})}\!\!\!\! \nu_e. 
    \]
    Thus
    \[ z_2 - z_1 = z_2(1 - \tfrac{z_1}{z_2}) \geq z_2\left(1 - \frac{\sum_{e \in \delta^+(V_z)} \nu_e}{\sum_{e \in \delta^-(V_z)} \nu_e}\right).%
              \]
              Since $z_2 > z_1$, $\sum_{e \in \delta^-(V_{z})} \nu_e \leq M + u_0 \leq 2M$, and $z_2 \geq \ell'_s(\theta) = 1$,
              $z_2 - z_1 \geq 1/(2M)$.

    Our next goal is to bound the queue lengths.
    The only extra ingredient we need is a bound on the speed at which a queue can grow. 
    There is always a {\sc ntfr} which does not route flow along cycles \citep[see][Theorem 6.64]{Koch}, so that $x'_e\le u_0$ for all $e \in \Ei$. 
    Therefore, $\max_v \ell'_v = \max\{\ell'_s, \max_e x'_e / \nu_e\} \leq u_0$.
Hence,
for all times $\theta$ before steady state is reached,
\[ 
    \ell_t(\theta) - \ell_s(\theta) \leq \ell_t(0) - \ell_s(0) + (u_0 - 1) \theta \leq T + 2(u_0-1) M^2T \leq 2u_0M^2T.
\]
This implies the same bound on all queue waiting times.
\end{proof}

\section{Some constructions and conjectures}\label{examples}

While we have settled the finite-time convergence to a steady state, there are a number of questions 
about dynamic equilibria that remain open. In this section we provide some constructions exhibiting somehow surprising behavior. First we show that in a dynamic equilibrium the flow across a cut can be arbitrarily larger than the inflow. Then we build an instance for which the dynamic equilibria has exponentially many phases. We wrap up the section by discussing the possibility of characterizing the steady state queues, and some conjectures regarding more general steady state results. 

\subsection{Flow across a cut}

As mentioned in the introduction a first conjecture would be that, similarly to what happens for static flows,  the flow across any 
cut is always bounded by the inflow. This would provide a way to estimate the queues and to prove their boundedness.   
Unfortunately  the property fails in a dynamic equilibrium. The reason for this is that flow entering the network at different 
times may experience different delays in such a way that they later superpose across an intermediate cut. 
In Example~\ref{eg:1} we constructed one such instance, where the peak outflow rate was a factor $13/12$ larger than the inflow rate. 
We will now show how to construct instances in which the outflow is arbitrarily larger than the inflow. 

\begin{figure}[t]
    \centering

    \begin{tikzpicture}[->,>=stealth',shorten >=1pt,auto,node distance=2.8cm, thick, scale=0.6, every node/.style={scale=0.6}]
  \tikzstyle{every state}=[fill=black!20,draw=none,text=black]

  \node[state] (s)   at (0,0)                 {\large $s$};
  \node[state]         (t) at (7,0) {\large $t$};
  \node[state]         (v) at (3,-3) {\large $v$};
    \node[state]         (t') at (14,0) {\large $t'$};
    
  \path (s) edge   [bend left=50]   node [pos=0.1, above, sloped]  {\large $\nu_e=u/3$} 
   						  node [pos=0.5, above, sloped]  {\large $\tau_e=5\plen/3$} (t)
                  edge   [bend right] 	  node  [pos=0.2, below, sloped]  {\large $\nu_f=3u/4$} 
            					  node  [pos=0.68, below, sloped]  {\large $\tau_f=0$}  (v)
            (v) edge   [bend left] 	  node  [pos=0.18, above, sloped]  {\large $\nu_g=u/3$}
        						  node  [pos=0.6, above, sloped]  {\large $\tau_g=0$}  (t)
                 edge   [bend right]        node  [pos=0.13, below, sloped]  {\large $\nu_h=u$}
            					  node  [pos=0.6, below, sloped]  {\large $\tau_h=5\plen/3$} (t)
            (t) edge   [bend left=50]   node [pos=0.12, above, sloped]  {\large $\nu_{e'}=u$} 
   						  node [pos=0.5, above, sloped]  {\large $\tau_{e'}=\plen/4$} (t')
                edge   [bend right=50]   node [pos=0.14, below, sloped]  {\large $\nu_{f'}=u/3$} 
   						  node [pos=0.5, below, sloped]  {\large $\tau_{f'}=0$} (t');
\end{tikzpicture}
\caption{The modified instance.}\label{fig:chain}
\end{figure}
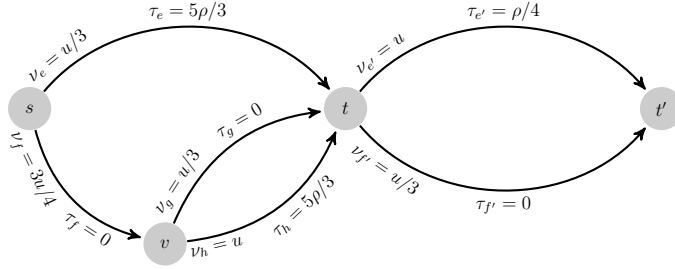

The base of the construction, given in Figure~\ref{fig:chain}, consists on appending to the instance in Example~\ref{eg:1} (slightly rescaled for convenience) an extra sink $t'$ and two arcs, $e'$ and $f'$, connecting $t$ with $t'$. Since the outflow in Example~\ref{eg:1} (with $\tau = 5\plen/6$) is
\begin{equation}\nonumber
\left\{\begin{array}{cl}
u/3&\mbox{ for }\theta\in [0,5\plen/4)\\
2u/3&\mbox{ for }\theta\in [5\plen/4,3\plen/2)\\
13u/12&\mbox{ for }\theta\in [3\plen/2,5\plen/2)\\
u&\mbox{ for }\theta\in [5\plen/2,\infty),
\end{array}\right.
\end{equation}
the outflow in the new instance will be 
\begin{equation}\nonumber
\left\{\begin{array}{cl}
u/3&\mbox{ for }\theta\in [0,7\plen/4)\\
13u/12&\mbox{ for }\theta\in [7\plen/4,11\plen/4)\\
u&\mbox{ for }\theta\in [11\plen/4,\infty).\\
\end{array}\right.
\end{equation}
Indeed, while the flow leaving $t$ is $u/3$, it will directly take arc $f'$ and therefore immediately reach $t'$. 
Then the flow leaving $t$ increases to $2u/3$ and it will continue to choose arc $f'$, though a queue will start to build up on that arc. 
At time $3\plen/2$ (considering the local time at $t$) the queue in $f'$ will be of size $(u/3)\cdot(\plen/4)$, implying a queuing time of $\plen/4$. 
Since this is exactly the delay of arc $e'$, at this point $e'$ enters the shortest path network. 
From this point the flow leaving $t$ increases to $13u/12$ and thus the flow splits: 
$u/3$ flow units take arc $f'$ while the remaining $9u/12$ take arc $e'$. 
The latter pattern stays until time $5\plen/2$ when the flow leaving $t$ changes to $u$ and thus it splits as $u/3$ taking arc $f'$ and  $2u/3$ taking arc $e'$, at which point steady state has been reached.

Some observations are in order: 
\begin{itemize}[-]
    \item The length of the ``pulse'' in this construction is exactly $\plen$, and so this can be made as large as required.
    \item The full pulse is produced as long as the inflow is $u$ for a period of $7\plen/4$; if it were to decrease or otherwise vary after this, it would only interfere with the final steady state phase.
    \item All arcs in the gadget have capacity at least $u/3$, meaning that no queues will form if the inflow is bounded by $u/3$.
        This means that the instance would still produces a pulse if the inflow was at most $u/3$ for some initial period, and then equal to $u$ for a period (of length at least $7\plen/4$).
\end{itemize}

Building on these observations, 
we now construct an instance where the ``pulse'' is arbitrarily larger than the inflow.
More precisely, we will construct a gadget $\pulse(u, k, q)$, for any $k \in \mathbb{N}$ and $u, q \in \mathbb{R}_+$, with the following properties (we define $\lambda := 13/12$ for convenience).
\begin{enumerate}[(i)]
\item\label{prop:flow} 
    There is some value $\warmupc_k = O((5/3)^k)$ such that, assuming a constant inflow rate of $u$ in the interval $[0, (5/3)^k\plen]$, the inflow rate into the sink is at most $u/3$ for $\theta < \plen\warmupc_k$ and exactly $\lambda^k \cdot u$ for $\theta \in [\plen\warmupc_k, \plen\warmupc_k + \plen)$ (where $\theta$ is the local time at the sink).
\item\label{prop:mincap} All arcs have capacity at least $u/3$.
\item\label{prop:size} The gadget has $6k$ arcs, with all free-flow delays being multiples of $3^{-k}\plen$ and bounded by $O((5/3)^k\plen)$, and all arc capacities being multiples of $12^{-k}u$ and bounded by $\lambda^k u$. 
\end{enumerate}

We have already seen in Figure~\ref{fig:chain} the construction of a $\pulse(u, 1, \plen)$ gadget; all the required properties clearly hold.
We construct a $\pulse(u, k, \plen)$ gadget for $k \geq 1$ by chaining together in series a $\pulse(u, k-1, \tfrac53\plen)$ gadget (call it $G$) followed by a $\pulse(\lambda^{k-1}u, 1, \plen)$ gadget (call it $H$).
Properties (\ref{prop:flow})--(\ref{prop:size}) are then mostly easy to verify inductively.
An important observation is that since the outflow of $G$ before its pulse is at most $u/3$, which is smaller than the smallest capacity arc in $H$, this initial flow does not cause any disturbance.
The bound on $\warmupc_k$ follows easily by making the stronger inductive claim that $\warmupc_k = \tfrac{21}{8}((5/3)^k - 1)$.
Now $\warmupc_1 = 7/4$, so this is correct for $k=1$.
For $k \geq 2$,
the flow into $H$ increases to $\lambda^{k-1}u$ at time $\plen\warmupc_{k-1}\cdot (5/3)$, meaning that the outflow increases to $\lambda^k u$ at time $\plen\warmupc_{k-1}(5/3) + (7/4)\plen = \plen\warmupc_k$, and this the inductive claim holds.

\subsection{Instances with an exponential number of phases}
\label{sec:exponential}
A natural hope would be that the number of phases is always polynomial in the input size (ideally as measured by the number of arcs in the instance, but failing that, as measured by the total encoding length of the instance). Unfortunately, as we show next, this is not the case and indeed the number of phases of a dynamic equilibrium may be exponential even is relatively simple series-parallel networks. This may help explaining why it is so notoriously difficult to practically compute dynamic equilibria in real-world networks \citep{W12,FH18}.

The $\pulse$ gadget of the previous section will be a first key ingredient.
The next step of our construction is to use it to build a ``damping'' gadget.
For any $k \in \mathbb{Z}_+$, $\qon \in \mathbb{R}_+$ we construct a gadget $\damper(k, \qon)$ 
with the following properties (recall that $\lambda := 13/12$).
\begin{enumerate}[(i)]
    \item There are values $\dampreq = e^{O(k)}\qon$ and $\theta_1 +2\qon < \theta_2 = e^{O(k)}\qon$ so that the following holds.
        Given an inflow of rate 1 in the interval $[0, \dampreq)$, the gadget produces an outflow that is precisely $1$ in the intervals $[\theta_1, \theta_1 + \qon)$ and $[\theta_2, \theta_2 + \qon)$, and precisely $\lambda^{-k}$ in the interval $[\theta_2 - \qoff, \theta_2)$.
    \item All arc capacities are multiples of $12^{-k}$ between $\lambda^{-k}/3$ and $1$.
    \item\label{prop:damplength} Assuming an inflow rate that is always bounded by $1$, the sum of queueing delays and free-flow delays within the gadget can never exceed $e^{O(k)}\qon$.
    \item The construction has $O(k)$ arcs, with total encoding length $O(k\cdot |\qon|)$ (where $|\qon|$ denotes the encoding length of $\qon$).
\end{enumerate}

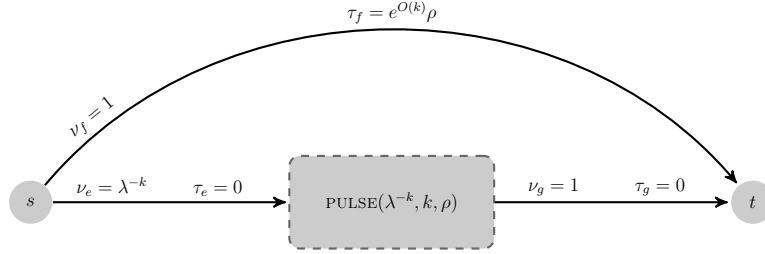
\begin{figure}
    \centering

    \begin{tikzpicture}[->,>=stealth',shorten >=1pt,auto,node distance=2.8cm, thick, scale=0.6, every node/.style={scale=0.6}]
  \tikzstyle{every state}=[fill=black!20,draw=none,text=black]

  \node[state] (s)   at (0,0)                 {\large $s$};
  \node[gadget] (pulse) at (8,0) {\large $\pulse(\lambda^{-k}, k, \qon)$};
    \node[state]         (t) at (16,0) {\large $t$};
    
    \path 
            (s) edge   [bend left=50]   node [pos=0.1, above, sloped]  {\large $\nu_f=1$} 
    node [pos=0.5, above, sloped]  {\large $\tau_f=e^{O(k)}\plen$} (t)
                                                  (s) edge 	  node  [pos=0.25, above, sloped]  {\large $\nu_e=\lambda^{-k}$}
        						  node  [pos=0.7, above, sloped]  {\large $\tau_e=0$}  (pulse)
            (pulse)
                 edge        node  [pos=0.25, above, sloped]  {\large $\nu_g=1$}
            					  node  [pos=0.7, above, sloped]  {\large $\tau_g=0$} (t);
\end{tikzpicture}

    \caption{Construction of a $\damper(k, \qon)$ gadget.}
    \label{fig:damper}
\end{figure}

The construction of this gadget is shown in Figure~\ref{fig:damper} (the precise value of $\tau_f$ is discussed below).
Initially, no flow uses arc $f$; a queue grows on $e$, which sends the correct inflow of $\lambda^{-k}$ into the $\pulse(\lambda^{-k}, k,\qon)$ gadget. 
After some time, the pulse gadget generates a pulse of size exactly 1, for a period of length $\qon$.
We set $\theta_1$ to the time that this pulse begins (as measured at $t$); by Property~(\ref{prop:flow}) of the $\pulse$ gadget, $\theta_1 = O((5/3)^k\qon)$.
Since all free-flow delays in the gadget are bounded by $O((5/3)^k\qon)$, and all capacities are multiples of $12^{-k}$ bounded by $1$,
Theorem~\ref{thm:pseudo}
yields a bound of $e^{O(k)}\qon$,
both on the time to reach steady state, and also on the total delay within the pulse gadget.
Once the pulse gadget has reached steady state, its outflow remains at $\lambda^{-k}$; this will be the outflow of the $\damper$ gadget as well, as long as $f$ has not joined the shortest path network.

We now see how to choose $\tau_f$; large enough that $f$ only joins the shortest path network after the pulse gadget has been sending outflow $\lambda^{-k}$ for at least $\qoff$ amount of time.
Since the queue on arc $e$ grows at rate $\lambda^k-1$, and the delay within the pulse gadget itself is at most $e^{O(k)}\qon$, 
it is clear that we can choose $\tau_f = e^{O(k)}\qon$.
Once $f$ does join the shortest path network, the entire $\damper$ gadget reaches steady state, and the flow into $t$ increases to $1$.
This determines $\theta_2$, and from this we can fix $Q$, 
ensuring that both are bounded by $e^{O(k)}\qon$. 
Property~(\ref{prop:damplength}) also follows immediately from the choice of $\tau_f$.

Now we come to the construction of the gadget $\exponential(d)$, which for any $d \in \mathbb{Z}_+$ will have size quadratic in $d$, and at least $2^d$ phases.
The construction is recursive.
In the following, $C$ will denote a constant chosen large enough in relation to the hidden implicit constants in the definition of the $\damper$ gadget; the precise requirements on $C$ will become clear.
We construct $\exponential(1)$ by simply taking two parallel arcs, one of capacity $1/3$ and length $0$, and the other of capacity $2/3$ and length 1; this clearly has two phases.
To construct $\exponential(d)$ for $d \geq 2$, 
take a $\damper(15d, C^{(d-1)^2})$ gadget (call it $G$), 
and follow this in series by an $\exponential(d-1)$ gadget (call it $H$).

The idea behind this construction is that because the outflow $\lambda^{-15d}$ from the damper gadget $G$ during the damped period is smaller than the minimum arc capacity (of at least $\lambda^{-15(d-1)}/3$) in $H$, queues within the gadget will decrease.
The length of the damped phase has been chosen to be long enough that all queues in $H$ empty out completely (this is the only really delicate aspect of this construction).
The two high outflow periods of $G$ last long enough that $H$ runs (inductively) through $2^{d-1}$ phases during both periods, giving a total of at least $2^d$ phases.

The following properties about $\exponential(d)$ are then straightforward to confirm inductively, exploiting also the properties of the $\damper$ gadget.
\begin{enumerate}[(i)]
    \item Given a constant inflow of 1 in the interval $[0,C^{d^2})$, the gadget goes through $2^d$ phases.
    \item\label{prop:expcaps} All arc capacities are multiples of $12^{-15d}$ between $\lambda^{-15d}/3$ and $1$.     \item\label{prop:explength} Assuming an inflow rate that is always bounded by $1$, the sum of queueing delays and free-flow delays within the gadget can never exceed $C^{d^2 - d}$. 
    \item The gadget consists of less than $(10d)^2$ arcs, and has encoding length $O(d^4)$. 
\end{enumerate}
Property~(\ref{prop:explength}) requires that $C$ is chosen large enough that the bound on the total delays in $G$ given by 
Property~(\ref{prop:damplength}) of the \damper{} gadget is smaller than $C^{k/15 - 2}\plen$, which for $k=15d$ and $\plen=C^{(d-1)^2}$ yields a bound of $C^{d^2 - d - 1}$. 
Inductively the total delays in $H$ sum to at most $C^{(d-1)^2 - (d-1)} = C^{d^2 - 3d}$.
Overall, we obtain a bound of at most $C^{d^2-d-1} + C^{d^2-3d} \leq C^{d^2-d}$.

Let us now see that all queues in $H$ do empty out during the damped period.
By Properties~(\ref{prop:expcaps}) and (\ref{prop:explength}), Theorem~\ref{thm:pseudo} 
tells us that the time to reach steady state from the beginning of the damped period 
is at most 
\[ 2C^{(d-1)^2 - (d-1)}\cdot (10d)^4\cdot (12^{15d})^2 \leq C^{d^2 - 3d + 2}\cdot C^{d-1} = C^{(d-1)^2}, \]
which is the length of the damped phase.
(We assume in the above that $C$ is large enough that $2(10d)^4\cdot(12^{15d})^2 \leq C^{d-1}$.)
Since the inflow into $H$ in the damped period, namely $\lambda^{-15d} < \lambda^{-15(d-1)}/3$, is lower than the capacity of any arc in $H$, the steady state necessarily has no queues.

\medskip

While we know that the number of phases may be very large, 
it is natural to expect that there are only a \emph{finite} number of phases.
While we conjecture that this is true, it is not ruled out by our result.
Our result does show that if the length of all phases in the evolution is bounded away from zero, 
then there can only be a finite number.
It is not ruled out, however, that an infinite sequence of phases occurs in a finite amount of time.
This is the same issue discussed in \citet{ccl}.
The issue is significant; 
it is the one obstacle to showing uniqueness (in an appropriate sense) of dynamic equilibria.

If such a result could be shown, an even stronger conjecture would be that the number of phases is 
pseudopolynomially bounded in the input size.
This would show that the exponential capacities and free-flow delays in the $\exponential$ gadget construction are in fact necessary.

\subsection{Steady state queue lengths}\label{examples-ssq}

Knowing that the dynamic equilibrium always reaches a steady state, a natural question is whether  steady state queues can 
be characterized without having to compute the full equilibrium evolution. While we already observe that this is the case 
when the dual problem \eqref{eq:dual} has a unique solution, which occurs generically, the following example suggests that this is likely not possible in general.
\begin{example}\label{eg:3}
Consider the network of Example~\ref{eg:1}, setting $\tau=2$ and $u=1$, with an extra node $\hat t$, which becomes the new sink, and two additional arcs, $a=(t,\hat t)$ and $b=(t,\hat t)$. Let $\nu_a=2/3$, $\nu_b=1/3$, $\tau_a=0$, and $\tau_b=1$. 
\begin{figure}[h]
    \centering
\begin{tikzpicture}[->,>=stealth',shorten >=1pt,auto,node distance=2.8cm, thick, scale=0.6, every node/.style={scale=0.6}]
\tikzstyle{every state}=[fill=black!20,draw=none,text=black]

  \node[state] (s)   at (0,0)                 {\large $s$};
  \node[state]         (t) at (7,0) {\large $t$};
  \node[state]         (t') at (12,0) {\large $\hat t$};
  \node[state]         (v) at (3,-3) {\large $v$};

  \path (s) edge   [bend left=50]   node [pos=0.12, above, sloped]  {\large $\nu_e=1/3$} 
   						  node [pos=0.5, above, sloped]  {\large $\tau_e=2$} (t)
                  edge   [bend right] 	  node  [pos=0.18, below, sloped]  {\large $\nu_f=3/4$} 
            					  node  [pos=0.65, below, sloped]  {\large $\tau_f=0$}  (v)
            (v) edge   [bend left] 	  node  [pos=0.18, above, sloped]  {\large $\nu_g=1/3$}
        						  node  [pos=0.6, above, sloped]  {\large $\tau_g=0$}  (t)
                 edge   [bend right]        node  [pos=0.15, below, sloped]  {\large $\nu_h=1$}
            					  node  [pos=0.6, below, sloped]  {\large $\tau_h=2$} (t)
	  (t) edge   [bend left=50]   node [pos=0.18, above, sloped]  {\large $\nu_a=2/3$} 
   						  node [pos=0.6, above, sloped]  {\large $\tau_a=0$} (t')
	 (t) edge   [bend right=50]   node [pos=0.18, below, sloped]  {\large $\nu_b=1/3$} 
   						  node [pos=0.6, below, sloped]  {\large $\tau_b=1$} (t');
\end{tikzpicture}
\end{figure}
Clearly, up to time $3+3/5$ all flow will simply take arc $a$ and will not queue at $t$. 
Therefore we can ignore this initial phase, and the queues that will form at equilibrium in arcs $a$ and $b$ are the same as those that we would have in a network consisting of just nodes $t$ (the source) and $\hat t$ (the sink) and inflow
\begin{equation}\nonumber
u_0(\theta)=\left\{\begin{array}{cl}
13/12&\mbox{ for }\theta\in [0,2+2/5)\\
1&\mbox{ for }\theta\in [2+2/5,\infty).\\
\end{array}\right.
\end{equation}
In this instance all flow will take arc $a$ for time $\theta\in [0,8/5)$, forming a queue $z_e(8/5)=2/3$. At this point flow will start splitting between arcs $a$ and $b$ in proportions $2/3$, $1/3$, implying that queues will grow on both arcs until time $2+2/5$ where the steady state is achieved. The steady state queues will thus be $z^*_a=32/45$ and $z^*_b=1/45$. This example shows that the steady state queues are not minimal in any reasonable sense and that, furthermore, slightly changing the instance (e.g. $\tau_4$) will change the steady state queues. Furthermore, if we slightly increase the capacity of arc $b$, say to $1/3+\varepsilon$ the steady state queues jump to $z^*_a=2/3$ and $z^*_b=0$.  

Additionally, one can observe from a slight variant of this instance, namely taking $\tau$ large and $\nu_b=1/3+\varepsilon$, that queues may grow very large in the transient and then go down to zero at steady state.
\end{example}

\subsection{Conjectures on more general steady state results}

Suppose that the inflow into the network is not a constant, but a time varying function $u_0(\theta)$.
Suppose moreover that $u_0(\theta)$ is always bounded by the min-cut capacity of the network.
Of course, there cannot be convergence to a steady state in this setting;
but it is natural to expect that queues stay bounded.
We conjecture that this is the case.
It is not clear how our potential argument can aid in proving this conjecture.
In particular, note that the boundedness of $\Phi$ alone is not helpful, as this does not imply any bounds on the queue sizes.

\medskip

Suppose now that the inflow is constant, but \emph{larger} than the min-cut capacity.
Then again the evolution can of course not converge to a steady state in the way that we have defined it:
queues cannot remain bounded.
However, we conjecture that it is still true that after a finite amount of time, the equilibrium settles into a final phase that lasts forever.\footnote{Subsequent to this work, this conjecture has been resolved positively. See: Olver, Sering and Vargas Koch. Continuity, Uniqueness and Long-Term Behavior of Nash Flows Over Time, 2021. Available online at \href{https://arxiv.org/abs/2111.06877}{arXiv:2111.06877}.}

\medskip

\paragraph*{Acknowledgements.}
We thank the organizers of the Dagstuhl Seminar \emph{``Dynamic Traffic Models in Transportation Science''} held in 2015, where this work was initiated. We sincerely thank Vincent Acary, Umang Bhaskar and Martin Skutella for enlightening discussions, and Jonas Isreal for helpful comments on Theorem~\ref{thm:pseudo}. We also thank the Associate Editor at \emph{Operations Research} and two reviewers for their comments that helped improve the presentation.

This work was partially supported by ANID through grants FONDECYT 1190043, FONDECYT 1171501, Basal  AFB-170001 and Basal AFB-180003; and by the Dutch Science Foundation (NWO) through a TOP grant (614.001.510) and a Vidi grant (016.Vidi.189.087).

\end{document}